\newcommand{\cs}{\mbox{$\mi{CmplSet}$}\xspace}
\newtheorem{example}{Example}
\newtheorem{definition}{Definition}
\newtheorem{proposition}{Proposition}
\newenvironment{proof}{\hspace{8pt}\ti{Proof:}}{}
\newtheorem{remark}{Remark}
\newcommand{\bm}[1]{{\mbox{\boldmath $#1$}}}
\newcommand{\di}[1]{\mbox{$\mi{Diam}(#1)$}\xspace}
\newcommand{\rch}[1]{\mbox{$\mi{Rch}(#1)$}\xspace}
\newcommand{\imp}{\Rightarrow}
\newcommand{\pnt}[1]{{\mbox{$\vec{#1}$}}}
\newcommand{\Pnt}[1]{{\mbox{$\vec{#1}\,'$}}}
\newcommand{\ppnt}[2]{{\mbox{$\vec{#1}_{#2}$}}}
\newcommand{\V}[1]{\mbox{$\mathit{Vars}(#1)$}}
\newcommand{\s}[1]{\mbox{$\{#1\}$}}
\newcommand{\nGz}[2]{$G_{non-\{z\}}$}
\newcommand{\prr}[1]{\mi{Prev}(\boldsymbol{q})}
\newcommand{\mi}[1]{\mathit{#1}}
\newcommand{\ti}[1]{\textit{#1}}
\newcommand{\tb}[1]{\textbf{#1}}
\newcommand{\ttt}{\>\>\>}
\newcommand{\Tt}{\>\>}
\newcommand{\tbl}{\mbox{$T_{\mi{tbl}}$}\xspace}
\newcommand{\stbl}{\mbox{$T^*_{\mi{tbl}}$}\xspace}
\newcommand{\prob}[2]{\mbox{$\exists{#1} [#2]$}}
\newcommand{\Prob}[3]{\mbox{$\exists{#1}\exists{#2}[#3]$}}
\newcommand{\Comment}[1]{}
\newcommand{\abs}[1]{\mbox{$\mathcal{#1}$}}
\newcommand{\Abs}[2]{\mbox{$\mathcal{#1}^{\mi{#2}}$}}
\begin{document}

%\title{Verifying Implementations With Incomplete Specification By Generating Inconsistent Properties}
%\title{Verifying Implementations With Incomplete Specification}
%\title{On Implementations With Incomplete Specification}
\title{On Verifying Designs With Incomplete Specification}
%\title{Verification By Generation of False Properties}

\author{\IEEEauthorblockN{Eugene Goldberg} 
\IEEEauthorblockA{
eu.goldberg@gmail.com}}

\maketitle
\begin{abstract}
Incompleteness of a specification \ti{Spec} creates two problems.
First, an implementation \ti{Impl} of \ti{Spec} may have some
\ti{unwanted} properties that \ti{Spec} does not forbid. Second,
\ti{Impl} may break some \ti{desired} properties that are not in
\ti{Spec}.  In either case, \ti{Spec} fails to expose bugs of
\ti{Impl}.  In an earlier paper, we addressed the first problem above
by a technique called Partial Quantifier Elimination (PQE).  In
contrast to complete QE, in PQE, one takes out of the scope of
quantifiers only a small piece of the formula. We used PQE to generate
properties of \ti{Impl} i.e. those \ti{consistent} with
\ti{Impl}. Generation of an unwanted property means that \ti{Impl} is
buggy.  In this paper, we address the second problem above by using
PQE to generate false properties i.e those that are \ti{inconsistent}
with \ti{Impl}.  Such properties are meant to imitate the missing
properties of \ti{Spec} that are not satisfied by \ti{Impl} (if
any). A false property is generated by modifying a piece of a
quantified formula describing 'the truth table' of \ti{Impl} and
taking this piece out of the scope of quantifiers.  By modifying
different pieces of this formula one can generate a ``structurally
complete'' set of false properties.  By generating tests detecting
false properties of \ti{Impl} one produces a high quality test set. We
apply our approach to verification of combinational and sequential
circuits.
\end{abstract}

\section{Introduction}
One of the drawbacks of formal verification is that the set of
properties describing a design usually does not specify the latter
completely. Incompleteness of a specification \ti{Spec} creates two
problems.  First, an implementation \ti{Impl} of \ti{Spec} may have
some \ti{unwanted} properties that \ti{Spec} does not forbid. Second,
\ti{Impl} may break some \ti{desired} properties that are not in
\ti{Spec}.  In either case, \ti{Spec} fails to expose bugs of
\ti{Impl}. In testing, the incompleteness of verification is addressed
by using a set of tests that is complete \ti{structurally} rather than
functionally.  Structural completeness is achieved by probing every
piece of the design under test.

In~\cite{con_props}, we used the idea of structural completeness to
attack the first problem above. The idea was to use a technique called
partial quantifier elimination (PQE) to generation properties
\ti{consistent} with \ti{Impl}. In contrast to complete quantifier
elimination, PQE takes out of the scope of quantifiers only a small
piece of formula.  A property $Q(V)$ of \ti{Impl} is produced by
applying PQE to formula \prob{W}{F(V,W)} defining ``the truth table''
of \ti{Impl}.  Here $F$ describes the functionality of \ti{Impl} and
$V,W$ are sets of external and internal variables of \ti{Impl}
respectively. If $Q$ is not implied by \ti{Spec}, then the latter is
incomplete.  If $Q$ describes an unwanted property of \ti{Impl}, the
latter is buggy.  Otherwise, a new property is added to \ti{Spec} to
make it imply $Q$.  By taking different pieces of $F$ out of the scope
of quantifiers, one can build a specification that is \ti{structurally
  complete}.

In this paper, we continue this line of research by addressing the
second problem above. Namely, we use PQE to generate false properties
i.e. those \ti{inconsistent} with \ti{Impl}. They are meant to imitate
the missing properties of \ti{Spec} not satisfied by \ti{Impl} (if
any). Tests breaking a false property may expose a bug that was not
discovered due to \ti{Spec}'s lacking a property not satisfied by
\ti{Impl}. A false property $Q(V)$ is generated in two steps. First, a
new formula $F^*$ is obtained from $F$ by a slight modification. Then
the modified part is taken out of the scope quantifiers from
\prob{W}{F^*}. If $F^**$ and $F$ are not logically equivalent, this
produces a property $Q$ that is implied by $F^*$ but not by $F$. By
modifying different parts of $F$ one can generate a ``structurally
complete'' set of false properties. By generating tests breaking these
properties one can build a high quality test.

Our contribution is as follows. First, we show that one can use PQE to
generate false properties of \ti{Impl} for combinational
circuits. Second, we describe an algorithm that, given a combinational
circuit, forms a structurally complete set of false properties of this
circuit. Third, we extend our approach to sequential circuits.
Fourth, to show the high quality of tests generated off false
properties we relate the former to tests detecting stuck-at faults.

The main body\footnote{Some additional information is given in the
  appendix.} of this paper is structured as follows. Basic definitions
are given in Section~\ref{sec:basic}. In
Section~\ref{sec:gen_fls_props}, we describe generation of false
properties for combinational circuits.  A procedure for building a
structurally-complete set of false properties is given in
Section~\ref{sec:cmpl_set}. Section~\ref{sec:seq_circs} extends our
approach to sequential circuits by showing how one can generate false
safety properties. We relate our approach to fault/mutation detection
in Section~\ref{sec:relation}. Finally, we make some conclusions in
Section~\ref{sec:concl}.

\section{Basic Definitions}
\label{sec:basic}
%
% Assignment
%
%\vspace{-10pt}
In this paper, we consider only propositional formulas. We assume that
every formula is in conjunctive normal form (CNF).  A \ti{clause} is a
disjunction of literals (where a literal of a Boolean variable $w$ is
either $w$ itself or its negation $\overline{w}$). So a CNF formula
$H$ is a conjunction of clauses: $C_1 \wedge \dots \wedge C_k$. We
also consider $H$ as the \ti{set of clauses} \s{C_1,\dots,C_k}.

\begin{definition}
Let $V$ be a set of variables. An \tb{assignment} \pnt{q} to $V$ is a
mapping $V~\rightarrow \s{0,1}$.
\end{definition}

%
% Vars(F)
%
\begin{definition}
  \label{def:vars}
Let $H$ be a formula. \bm{\V{H}} denotes the set of variables of $H$.
\end{definition}

%
%   QE problem
%
%\vspace{-10pt}
\begin{definition}
  \label{def:qe_prob}
Let $H(V,W)$ be a formula where $V,W$ are sets of variables.  The
\tb{Quantifier Elimination (QE)} problem specified by \prob{V}{H} is
to find formula $H^*(W)$ such that \bm{H^* \equiv \prob{V}{H}}.
\end{definition}

%
%   PQE problem
%
%\vspace{-10pt}
\begin{definition}
  \label{def:pqe_prob}
Let $H_1(V,W)$, $H_2(V,W)$ be formulas where $V,W$ are sets of
variables.  The \tb{Partial QE} (\tb{PQE}) problem of taking $H_1$ out
of the scope of quantifiers in \prob{V}{H_1 \wedge H_2} is to find
formula $H^*_1(W)$ such that \bm{\prob{V}{H_1 \wedge H_2} \equiv H^*_1
  \wedge \prob{X}{H_2}}. Formula $H^*_1$ is called a \tb{solution} to
PQE.
\end{definition}
%
% remark
%
\begin{remark}
  \label{rem:noise}
Note that if $H^*_1$ is a solution to the PQE problem above and a
clause $C \in H^*_1$ is implied by $H_2$ \ti{alone}, then $H^*_1
\setminus \s{C}$ is a solution too. So if all clauses of $H^*_1$ are
implied by $H_2$, then an empty set of clauses is a solution too (in
this case, $H^*_1 \equiv 1$).
\end{remark}

Let $N(X,Y,Z)$ be a combinational circuit where $X,Y,Z$ are sets of
input, internal and output variables respectively. Let $N$ consist of
gates $g_1,\dots,g_m$. A formula $F(X,Y,Z)$ specifying the
functionality of $N$ can be built as $G_1 \wedge \dots \wedge G_m$
where $G_i, 1 \leq i \leq m$ is a formula specifying gate
$g_i$. Formula $G_i$ is constructed as a conjunction of clauses
falsified by the incorrect combinations of values assigned to
$G_i$. Then every assignment satisfying $G_i$ corresponds to a
consistent assignment of values to $g_i$ and vice versa.

\begin{example}
\label{exmp:circ_form}
Let $g$ be a 2-input AND gate specified by $v_3 = v_1 \wedge
v_2$. Then formula $G$ is constructed as $C_1 \wedge C_2 \wedge C_3$
where $C_1 = v_1 \vee \overline{v}_3$, $C_2 = v_2 \vee
\overline{v}_3$, $C_3 = \overline{v}_1 \vee \overline{v}_2 \vee
v_3$. Here, the clause $C_1$, for instance, is falsified by assignment
$v_1=0,v_2 = 1,v_3=1$ inconsistent with the truth table of $g$.
\end{example}

\section{Generation Of False Properties}
\label{sec:gen_fls_props}
In this section, we describe generation of false properties of a
combination circuit by PQE.  Subsection~\ref{ssec:motiv} explains our
motivation for building false properties.
Subsection~\ref{ssec:fls_prop} presents construction of false
properties by PQE.  In Subsection~\ref{ssec:test_gen}, we describe
generation of tests breaking these properties and argue that these
tests are of very high quality.

%
% subsection
%
\subsection{Motivation for building false properties}
\label{ssec:motiv}
Let \abs{P}=\s{P_1(X,Z),\dots,P_k(X,Z)} be a specification of a
combinational circuit. Here $P_i$,$i=1,\dots,k$ are
properties\footnote{The fact that $P_i$ is a property means that an
  implementation satisfying $P_i$ cannot output \pnt{z} for an input
  \pnt{x} if $P_i(\pnt{x},\pnt{z}) = 0$.}  of this circuit and $X$ and
$Z$ are the sets of input and output variables respectively. Let
$N(X,Y,Z)$ be an implementation of the specification \abs{P} where $Y$
is the set of internal variables. Let $F(X,Y,Z)$ be a formula
describing $N$ (see Section~\ref{sec:basic}). Assume that $N$
satisfies all properties of \abs{P} i.e. $F \imp P_i$, $i=1,\dots,k$.

The specification \abs{P} is complete if it fully defines the
input/output behavior of $N$ i.e. if $P_1 \wedge \dots \wedge P_k \imp
\prob{Y}{F}$. Suppose that \abs{P} is \ti{incomplete}. As we mentioned
in the introduction, this may lead to overlooking buggy input/output
behaviors of $N$. In this paper, we address this problem by generating
false properties of $N$. The latter are meant to imitate properties
absent from \abs{P} that $N$ does not satisfy.  The idea here is that
tests breaking these false properties may expose the buggy behaviors
mentioned above.

%
% subsection
%
\subsection{Building false properties by PQE}
\label{ssec:fls_prop}
Let $F^*(X,Y,Z)$ be a formula obtained from $F$ by replacing a set of
clauses $G$ with those of $G^*$.  Let $F' = F \setminus G$. (So, $F =
G \wedge F'$.) Then the formula $F^*$ equals $G^* \!\wedge\!F'$.  Let
$\tbl(X,Z)$ and $\stbl(X,Z)$ denote the ``truth tables'' of $F$ and
$F^*$ respectively. That is $\tbl\!=\!\prob{Y}{F}$ and
$\stbl(X,Z)\!=\!\prob{Y}{F^*}$. An informal requirement to $G^*$ is
that it is unlikely to be implied by $F$. (Otherwise, the technique we
describe below cannot produce a false property.)  One more
requirement\footnote{If this requirement does not hold, $F^*$ may imply a non-empty set of
clauses $Q(X)$.  Since $F \not\imp Q$, the formula $Q$ is a trivial
false property that just excludes some input assignments to $X$. (So
any input falsifying $Q$ is a counterexample.)  In reality, the
requirement in question can be \ti{ignored} if one also ignores the
spurious false properties $Q(X)$.
} to $G^*$ is that for every
assignment \pnt{x} there exists \pnt{z} such that
$\stbl(\pnt{x},\pnt{z})=1$. (This is trivially true for $\tbl(X,Z)$
because the latter is derived from $F$ specifying a \ti{circuit}.)

Let $Q(X,Z)$ be a solution to the PQE problem of taking $G^*$ out of
the scope of quantifiers in \prob{Y}{G^* \wedge F'}.  That is
$\prob{Y}{G^* \wedge F'} \equiv Q \wedge \prob{Y}{F'}$. The
proposition below shows that $Q$ is a \tb{false property} of $N$ iff
the truth tables \tbl and \stbl are incompatible i.e.  $\tbl \not\imp
\stbl$. (If $\tbl \imp \stbl$ then \stbl can be viewed just as a
relaxation of \tbl).
%
%  Proposition
%
\begin{proposition}
\label{prop:fls_prop}
  $F \not\imp Q$ iff $\tbl \not\imp \stbl$.
\end{proposition}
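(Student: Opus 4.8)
The plan is to reduce everything to implications between the two truth tables and the PQE solution $Q$, exploiting the fact that $Q(X,Z)$ contains no internal variable. Throughout I would abbreviate $R(X,Z) := \prob{Y}{F'}$, so that the defining equation of the PQE problem, with $H_1 = G^*$ and $H_2 = F'$, reads $\prob{Y}{G^* \wedge F'} \equiv Q \wedge R$. Since $F^* = G^* \wedge F'$ we have $\stbl = \prob{Y}{F^*} = \prob{Y}{G^* \wedge F'}$, and hence the single clean identity $\stbl \equiv Q \wedge R$, which I will use repeatedly. I would then prove the contrapositive of the stated equivalence, namely $F \imp Q \iff \tbl \imp \stbl$, and negate both sides at the end.

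First I would establish $F \imp Q \iff \tbl \imp Q$. This is where the $Y$-freeness of $Q$ does the work. The implication $F \imp Q$ holds iff every triple $(\pnt{x},\pnt{y},\pnt{z})$ satisfying $F$ has $Q(\pnt{x},\pnt{z}) = 1$; because $Q$ ignores $\pnt{y}$, this is the same as requiring $Q(\pnt{x},\pnt{z}) = 1$ whenever \emph{some} $\pnt{y}$ makes $F$ true, i.e. whenever $\tbl(\pnt{x},\pnt{z}) = \prob{Y}{F} = 1$. Conversely, if $\tbl \imp Q$ then any $(\pnt{x},\pnt{y},\pnt{z})$ with $F = 1$ forces $\tbl(\pnt{x},\pnt{z}) = 1$ and hence $Q(\pnt{x},\pnt{z}) = 1$. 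Thus $F \imp Q$ and $\tbl \imp Q$ are the same condition.

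Second I would record the monotonicity fact $\tbl \imp R$. Since $F = G \wedge F'$ we have $F \imp F'$, so any $(\pnt{x},\pnt{z})$ for which some $\pnt{y}$ satisfies $F$ has that same $\pnt{y}$ satisfy $F'$; projecting out $Y$ gives $\tbl \imp R$. With this in hand, the chain $\tbl \imp Q \iff \tbl \imp \stbl$ is immediate from $\stbl \equiv Q \wedge R$: the forward direction combines the assumed $\tbl \imp Q$ with the just-proved $\tbl \imp R$ to obtain $\tbl \imp Q \wedge R \equiv \stbl$, while the backward direction merely drops the conjunct $R$. Composing the two equivalences yields $F \imp Q \iff \tbl \imp \stbl$, and contraposing both sides gives exactly the claimed statement $F \not\imp Q \iff \tbl \not\imp \stbl$.

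I expect the only subtle point — the main obstacle, such as it is — to be the first step, namely the careful justification that quantifying $Y$ out does not change the implication into $Q$; everything else is bookkeeping with the identity $\stbl \equiv Q \wedge R$ and the trivial inclusion $F \imp F'$. I would also keep in mind the standing assumption stated just before the proposition, that for every $\pnt{x}$ some $\pnt{z}$ satisfies $\stbl$: this does not enter the logical equivalence but is what rules out $Q$ collapsing to a spurious constraint on inputs alone, so that the conclusion is genuinely about a property over $(X,Z)$.
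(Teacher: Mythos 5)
Your proof is correct. It rests on exactly the same two facts as the paper's proof --- the defining identity $\stbl \equiv Q \wedge \prob{Y}{F'}$ and the monotonicity coming from $F \imp F'$ --- but you package them differently: you prove the contrapositive $F \imp Q \iff \tbl \imp \stbl$ as a chain of two formula-level equivalences ($F \imp Q \iff \tbl \imp Q$, justified by the $Y$-freeness of $Q$, followed by $\tbl \imp Q \iff \tbl \imp Q \wedge R$, justified by $\tbl \imp R$), whereas the paper argues each direction of the biconditional separately by exhibiting a witnessing assignment $(\pnt{x},\pnt{y},\pnt{z})$ (the execution trace of $N$ on $\pnt{x}$ in one direction, an assignment breaking $F \imp Q$ in the other) and deriving a contradiction pointwise. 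The two arguments are interchangeable in content; yours is somewhat more modular and makes explicit the lemma $\tbl \imp \prob{Y}{F'}$ that the paper uses implicitly in the step ``\pnt{p} satisfies $F$ and hence $F'$,'' while the paper's witness-based version has the minor practical virtue of directly producing the assignment $(\pnt{x},\pnt{z})$ that is later reused for test generation in Subsection~\ref{ssec:test_gen}. You are also right that the standing requirement on \stbl (every input has some satisfying output) plays no role in the equivalence itself.
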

\begin{proof}
\ti{The ``if'' part.}  Let $\tbl \not\imp \stbl$.  Let
(\pnt{x},\pnt{z}) be an assignment to $X \cup Z$ such that
$\tbl(\pnt{x},\pnt{z}) \not\imp \stbl(\pnt{x},\pnt{z})$. (That is
$\tbl(\pnt{x},\pnt{z})\!=\!1$ and $\stbl(\pnt{x},\pnt{z})\!=\!0$.) Let
\pnt{p}=(\pnt{x},\pnt{y},\pnt{z}) be the assignment describing the
execution trace in $N$ under the input \pnt{x}. Then \pnt{p} satisfies
$F$ and hence $F'$.  Assume the contrary, i.e. $F \imp Q$. Then
$Q(\pnt{x},\pnt{z})=1$. Since \pnt{p} satisfies $F'$, then $Q \wedge
\prob{Y}{F'}$ =\prob{Y}{G^* \wedge F'}= \prob{Y}{F^*}=\stbl=1 under
assignment (\pnt{x},\pnt{z}). So we have a contradiction.
  
\ti{The ``only if'' part.}  Let $F \not\imp Q$. Let
\pnt{p}=(\pnt{x},\pnt{y},\pnt{z}) be an assignment to \V{F} that
satisfies $F$ and falsifies $Q$ (i.e. \pnt{p} breaks $F \imp
Q$). Since $F(\pnt{p}) = 1$, then $\tbl(\pnt{x},\pnt{z})=1$. Since
$Q(\pnt{x},\pnt{z})=0$, then $Q \wedge \prob{Y}{F'}$ = \prob{Y}{G^*
  \wedge F'} = \prob{Y}{F^*} = \stbl = 0 under assignment
(\pnt{x},\pnt{z}). So, $\tbl(\pnt{x},\pnt{z}) \not\imp
\stbl(\pnt{x},\pnt{z})$.
 
\end{proof}
\subsection{Test generation}
\label{ssec:test_gen}
Let $Q(X,Z)$ be a property of $N(X,Y,Z)$ obtained as described in the
previous subsection. We are interested in tests that break $Q$.  A
single test of that kind can be extracted from an assignment
\pnt{p}=(\pnt{x},\pnt{y},\pnt{z}) that satisfies $F$ and falsifies $Q$
thus breaking $F \imp Q$. Such an assignment \pnt{p} can be found by
running a SAT-solver on $F \wedge \overline{C}$ where $C$ is a clause
of $Q$.  The \pnt{x} part of \pnt{p} is  \ti{a required test}.

Intuitively, tests breaking $Q$ should be of high quality. The reason
is that they are supposed to break a property that is ``almost
true''. Indeed, the proof of Proposition~\ref{prop:fls_prop} shows
that an assignment (\pnt{x},\pnt{z}) breaking $Q$ exposes the
difference in the input/output behavior specified by $F$ and
$F^*$. The latter is not a trivial task assuming that $F$ and $F^*$
are almost identical. To substantiate our intuition, in the appendix,
we show that stuck-at fault tests (that are tests of a very high
quality) are a special case of tests breaking false properties.

%(see Appendix~\ref{app:stuck_at})

\section{A Complete Set Of False Properties}
\label{sec:cmpl_set}
In the previous section, we introduced false properties as a means to
deal with incompleteness of specification. In this section, we
describe a procedure called \cs that constructs a set of false
properties that is structurally complete. Here we borrow an idea
exploited in testing: if functional completeness is infeasible, run
tests probing every design piece to reach \ti{structural}
completeness.  Similarly, structural completeness of a set of false
properties is achieved by generating properties relating to different
parts of the design.

\subsection{Input/output parameters of the \cs procedure}

 In this section, we continue the notation of the previous section.
 The pseudocode of \cs is shown in Figure~\ref{fig:cmpl_set}.  It
 accepts five parameters: \Abs{P}{hrd},\Abs{P}{inf},$N,F,Y$. The
 parameter \Abs{P}{hrd} is the set of properties of specification
 \abs{P}=\s{P_1,\dots,P_k} that were too hard to prove/disprove. The
 parameter \Abs{P}{inf} is an \ti{informal} specification that is
 assumed to be \ti{complete}\footnote{One can view \Abs{P}{inf} as a
   replacement for the truth table. The role of such a replacement can
   be played, for instance, by the designer.}. The parameter $N$
 denotes a combinational circuit implementing the specification
 \abs{P}.  The parameter $F$ is a formula describing the functionality
 of $N$.  Finally, the parameter $Y$ is the set of internal variables
 of $N$.

%
% CmplSet procedure (structurally complete set of fasle properties)
%
\setlength{\intextsep}{4pt}
\setlength{\textfloatsep}{10pt}
%% \begin{wrapfigure}{l}{1.6in}
\begin{figure}[h]
%\begin{figure}
%\begin{center}
\centering
\small
%\normalsize
%\vspace{5pt}
\parbox{0cm}{\begin{tabbing}
%\begin{tabbing}
aaa\=bb\=cc\= dd\= \kill
$\cs(\Abs{P}{hrd},\Abs{P}{inf},N,F,Y)$\{\\
\tb{\scriptsize{1}}\> $\abs{T} := \emptyset$ \\
\tb{\scriptsize{2}}\> $\Abs{P}{fls} := \emptyset$ \\
\tb{\scriptsize{3}}\> $\mi{Gates}:=\mi{ExtrGates}(N)$    \\
\tb{\scriptsize{4}}\> while $(\mi{Gates} \neq \emptyset)$ \{ \\
\tb{\scriptsize{5}}\Tt  $g := \mi{PickGate}(Gates)$ \\
\tb{\scriptsize{6}}\Tt  $\mi{Gates} := \mi{Gates} \setminus \s{g}$ \\
\tb{\scriptsize{7}}\Tt  $(G,G^*) := \mi{Change}(F,g)$ \\
\tb{\scriptsize{8}}\Tt  $F' :=F \setminus G$ \\
\tb{\scriptsize{9}}\Tt  $Q := \mi{PQE}(G^*,F',Y)$ \\
\tb{\scriptsize{10}}\Tt  $\mi{Tst} := \mi{RunSat}(F,Q)$ \\
\tb{\scriptsize{11}}\Tt  if ($\mi{Tst} = \mi{nil}$) continue \\
\tb{\scriptsize{12}}\Tt  $\Abs{P}{fls} := \Abs{P}{fls} \cup \s{Q}$ \\
\tb{\scriptsize{13}}\Tt  if $(\mi{BreaksProp}(\Abs{P}{hrd},\mi{Tst})$ \\
\tb{\scriptsize{14}}\ttt    return($\mi{Tst},\abs{T},\Abs{P}{fls}$) \\
\tb{\scriptsize{15}}\Tt  if $(\mi{BreaksSpec}(\Abs{P}{inf},\mi{Tst})$ \\
\tb{\scriptsize{16}}\ttt    return($\mi{Tst},\abs{T},\Abs{P}{fls}$) \\
\tb{\scriptsize{17}}\Tt  $\abs{T} := \abs{T} \cup \s{\mi{Tst}}$\} \\
\tb{\scriptsize{18}}\> return($\mi{nil}$,\abs{T},\Abs{P}{fls})\}  \\
%% \tb{\scriptsize{10}}\Tt  \\
%% \tb{\scriptsize{10}}\Tt  \\
%% \tb{\scriptsize{10}}\Tt  \\
%% \tb{\scriptsize{10}}\Tt  \\

\end{tabbing}}
\vspace{-10pt}
\caption{The \cs procedure}
%\vspace{-5pt}
\label{fig:cmpl_set}
%\end{center}
\end{figure}
%% \end{wrapfigure}

\cs has three output parameters: $\mi{Tst},\abs{T},\Abs{P}{fls}$.  The
parameter $\mi{Tst}$ denotes a test exposing a bug of $N$ (if
any). The parameter \abs{T} consists of tests generated by \cs that
has identified any bug. (These tests may still be of value e.g. for
regression testing). The parameter \Abs{P}{fls} denotes the set of
false properties generated by \cs.

\subsection{The while loop of the \cs procedure}
The main body of \cs consists of a while loop (lines 4-17).  This loop
is controlled by the set \ti{Gates} consisting of gates of
$N$. Originally, \ti{Gates} is set to the set of all gates of $N$
(line 3). \cs starts an iteration of the loop by extracting a gate $g$
of \ti{Gates} (lines 5-6).  Then \cs computes formula $G^*$ that
replaces the clauses of $G$ (describing the gate $g$) in formula $F$
(line 7). After that, \cs calls a PQE solver to find a formula
$Q(X,Z)$ such that $\prob{Y}{G^* \wedge F'} \equiv Q \wedge
\prob{Y}{F'}$ where $F' = F \setminus G$.  The formula $Q$ above
represents a property of $N$ that is supposed to be
false\footnote{Note that any subset of clauses of $Q$ is a property as
  well.  So, to decrease the complexity of PQE-solving, one can stop
  it when a threshold number of clauses is generated. Moreover, from
  the viewpoint of test generation, one can stop PQE \ti{as soon as} a
  clause $C(X,Z)$ not implied by $F$ is generated.}.

\cs checks if $Q$ is indeed a false property by running a SAT-solver
that looks for an assignment breaking $F \imp Q$ (line 10). If this
SAT-solver fails to find such an assignment, $Q$ is a true property.
In this case, \cs starts a new iteration (line 11). Otherwise, the
SAT-solver returns a test \ti{Tst} and $Q$ is added to \Abs{P}{fls} as
a new false property. Then \cs checks if \ti{Tst} breaks an unproved
property of \Abs{P}{hrd}. If it does, then \cs terminates (lines
13-14). After that, \cs checks if \ti{Tst} violates the informal
specification \Abs{P}{inf}. If so, then \cs terminates (line
15-16). Finally, \cs adds \ti{Tst} to \abs{T} and starts a new
iteration.

\section{Extension To Sequential Circuits}
\label{sec:seq_circs}
In this section, we extend our approach to sequential
circuits. Subsection~\ref{ssec:seq_defs} provides some relevant
definitions.  In Subsection~\ref{ssec:hi_lvl}, we give a high-level
view of building a structurally complete set of false properties for a
sequential circuit (in terms of safety properties).  Finally,
Subsection~\ref{ssec:fls_props_seq} describes generation of false
safety properties.

%
% Subsection
%
\subsection{Some relevant definitions}
\label{ssec:seq_defs}
Let $M(S,X,Y,S')$ be a sequential circuit. Here $X,Y$ denote input and
internal combinational variables respectively and $S,S'$ denote the
present and next state variables respectively. Let $F(S,X,Y,S')$ be a
formula describing the circuit $M$. ($F$ is built for $M$ in the same
manner as for a combinational circuit $N$, see
Section~\ref{sec:basic}.)  Let $I(S)$ be a formula specifying the
\ti{initial states} of $M$.  Let $T(S,S')$ denote \Prob{X}{Y}{F}
i.e. the \ti{transition relation} of $M$.

A \ti{state} \pnt{s} is an assignment to $S$. Any formula $P(S)$ is
called a \ti{safety property} for $M$. A state \pnt{s} is called a
$P$-\ti{state} if $P(\pnt{s})=1$. A state \pnt{s} is called
\ti{reachable} \ti{in} $n$ \ti{transitions} (or in $n$-th \ti{time
  frame}) if there is a sequence of states
\ppnt{s}{1},\dots,\ppnt{s}{n+1} such that \ppnt{s}{1} is an $I$-state,
$T(\ppnt{s}{i},\ppnt{s}{i+1})=1$ for $i=1,\dots,n$ and
\ppnt{s}{n+1}=\pnt{s}.

We will denote the \ti{reachability diameter} of $M$ with initial
states $I$ as \di{M,I}. That is if $n=\di{M,I}$, every state of $M$ is
reachable from $I$-states in at most $n$ transitions. We will denote
as \bm{\rch{M,I,n}} a formula specifying the set of states of $M$
reachable from $I$-states in $n$ transitions. We will denote as
\bm{\rch{M,I}} a formula specifying all states of $M$ reachable from
$I$-states.  A property $P$ \ti{holds} for $M$ with initial states
$I$, if no $\overline{P}$-state is reachable from an $I$-state.
Otherwise, there is a sequence of states called a \ti{counterexample}
that reaches a $\overline{P}$-state.

%
% Subsection: high-level view
%
\subsection{High-level view}
\label{ssec:hi_lvl}
In this paper, we consider a specification of the sequential circuit
$M$ above in terms of safety properties. So, when we say a
specification property $P(S)$ of $M$ we \ti{mean a safety property}.
Let $F_{1,n}$ denote $F_1 \wedge \dots \wedge F_n$ where $F_i$, $1
\leq i \leq n$ is the formula $F$ in $i$-th time frame i.e. expressed
in terms of sets of variables $S_i,X_i,Y_i,S_{i+1}$.  Formula
\rch{M,I,n} can be computed by QE on formula \prob{W_{1,n}}{I_1 \wedge
  F_{1,n}}.  Here $I_1 = I(S_1)$ and $W_{1,n} = \V{F_{1,n}} \setminus
S_{n+1}$. If $n \ge \di{M,I}$, then \rch{M,I,n} is also \rch{M,I}
specifying all states of $M$ reachable from $I$-states.

Let $\abs{P} = \s{P_1,\dots,P_k}$ be a set of properties forming a
\ti{specification} of a sequential circuit with initial states defined
by $I$.  Let a sequential circuit $M$ be an implementation of the
specification \abs{P}.  So every property $P_i$,$i=1,\dots,k$ holds
for $M$ and $I$ i.e. $\rch{M,I} \imp P_i$ , $i=1,\dots,k$. Verifying
the completeness of \abs{P} reduces to checking if $P_1 \wedge \dots
\wedge P_k \imp \rch{M,I}$.  Assume that proving this implication is
hard or it does not hold.  If \abs{P} is incomplete, $M$ may be buggy
for two reasons mentioned in the introduction. In particular, $M$ may
falsify a property absent from \abs{P}. (This means that there is a
reachable state \pnt{s} of $M$ that satisfies all properties of
\abs{P} and \pnt{s} is supposed to be unreachable.)

One can deal with the problem above like it was done in the case of
combinational circuits. Namely, one can use PQE to build false
properties that are supposed to imitate properties that the
specification \abs{P} has missed. By building counterexamples one
generates ``interesting'' reachable states. If one of these reachable
states is supposed to be \ti{unreachable}, $M$ has a bug.

%
% Subsection: generation of false properties
%
\subsection{Generation of false properties}
\label{ssec:fls_props_seq}
False properties of a sequential circuit $M(S,X,Y,S')$ can be
generated as follows.  Recall that formula \prob{W_{1,n}}{I_1 \wedge
  F_{1,n}} specifies \rch{M,I,n} (see the previous subsection).  Here
$I_1 = I(S_1)$ and $W_{1,n} = \V{F_{1,n}} \setminus S_{n+1}$. Let $G$
be a (small) subset of clauses of $F_{1,n}$.  Let $G^*$ be a set of
clauses meant to replace $G$.  We assume that $G^*$ satisfies two
requirements similar to those mentioned in
Subsection~\ref{ssec:fls_prop}. (In particular, we impose an informal
requirement that $G^*$ is unlikely to be implied by $I_1 \wedge
F_{1,n}$.)

Let $Q(S_{n+1})$ be a solution to the problem of taking $G^*$ out of
the scope of quantifiers in \prob{W_{1,n}}{I_1 \wedge G^* \wedge
  F'_{1,n}} where $F'_{1,n}=F_{1,n} \setminus G$. That is
\prob{W_{1,n}}{I_1 \wedge G^* \wedge F'_{1,n}} $\equiv Q \wedge $
\prob{W_{1,n}}{I_1 \wedge F'_{1,n}}. By definition, $Q$ is a property
$M$ (as a predicate depending only on variables of $S$). To show that
$Q$ is a \ti{false} property one needs to find an assignment breaking
$I_1 \wedge F_{1,n} \imp Q$. If such an assignment exists, there is a
counterexample proving that a $\overline{Q}$-state is reachable in $n$
transitions. In this case, $Q$ is indeed a false property.

Using the idea above and a procedure similar to that of
Section~\ref{sec:cmpl_set}, one can build a structurally complete set
of false safety properties.

\section{Our Approach And Fault/Mutation Detection}
\label{sec:relation}
Generation of tests breaking false properties is similar to
fault/mutation detection. In manufacturing testing, one generates
tests detecting faults of a predefined set~\cite{abram,atpg}. Often,
these faults (e.g. stuck-at faults) do not simulate real defects but
rather model logical errors.
In software verification, one of old techniques gaining its popularity
is mutation testing~\cite{budd,soft_testing}. The idea here is to
introduce code mutations (e.g. simulating common programmer mistakes)
to check the quality of an existing test suite or to generate new
tests.

Our approach has three potential advantages. First, PQE solving
introduces a new way to generate tests detecting faults/mutations. (In
the appendix, we give an example of using PQE for stuck-at fault
testing.) The appeal of PQE here is that it can take into account
subtle structural properties like unobservability. So PQE-solvers can
potentially have better scalability than tools based purely on
identifying logical inconsistencies (also known as conflicts).

The second advantage of our approach is that it transforms a
fault/mutation into a property i.e. into a \ti{semantic} notion.  This
has numerous benefits. One of them is that a false property specifies
a large number of tests (rather than a single test).  Suppose, for
instance, that we need to find a single test detecting faults $\phi_1$
and $\phi_2$ of a circuit.  An obvious problem here is that a test
detecting one fault may not detect the other.  In our approach,
$\phi_1$ and $\phi_2$ are cast as false properties $Q_1$ and $Q_2$. To
break $Q_1$ or $Q_2$ one needs to come up with a test satisfying
$\overline{Q}_1$ or $\overline{Q}_2$. To break \ti{both} properties
one just needs to find a test satisfying $\overline{Q}_1 \wedge
\overline{Q}_2$ (if any).

Third, our approach can be applied to \ti{abstract} formulas that may
not even describe circuits or programs. So, in a sense, the machinery
of false properties can be viewed as a generalization of
fault/mutation detection.

\section{Conclusions}
\label{sec:concl}
Having an incomplete specification may lead to a buggy implementation.
One of the problems here is that this implementation may not satisfy a
property omitted in the specification. We address this problem by
generating false properties i.e. those that are not consistent with
the implementation. The idea here is that a test breaking a false
property may also expose a bug in the implementation. False properties
are generated by a technique called partial quantifier elimination
(PQE).

Our three conclusions are as follows. First, the machinery of false
properties can be applied to verification of combinational and
sequential circuits. The efficiency of this machinery depends on that
of PQE solving. So developing powerful PQE algorithms is of great
importance. Second, the machinery of false properties can be viewed as
a generalization of fault/mutation detection. On one hand, this
implies that tests breaking false properties are of high quality.  On
the other hand, this means that the machinery of false properties can
be applied to abstract formulas.  Third, by generating properties
whose falsehood is caused by different parts of the design, one
generates a ``structurally complete'' set of false properties. Using
tests that break all properties of this set can significantly increase
the quality of testing.

\bibliographystyle{plain}
\bibliography{short_sat,local}
\vspace{15pt}
%\appendices
\appendix[Stuck-At Fault Tests And False Properties]
%\section{Stuck-At Fault Tests And False Properties}
%\label{app:stuck_at}
In this appendix, we relate stuck-at faults tests and those breaking
false properties built by PQE. This relation suggests that tests
breaking false properties are of high
quality. Subsection~\ref{ssec:recall} briefly recalls stuck-at fault
testing. In Subsection~\ref{ssec:spec_case}, we consider a special
case of Proposition\!~\ref{prop:fls_prop} where a false property is
generated by modifying the original circuit to \ti{another
circuit}. Subsection~\ref{ssec:modeling} describes how stuck-at faults
are modeled in our approach. Finally, in
Subsection~\ref{ssec:st_at_by_pqe}, we discuss generation of stuck-at
fault tests by PQE.
%
% subsection: recalling stuck-at faults
%
\subsection{Recalling stuck-at fault testing}
\label{ssec:recall}
A stuck-at fault is an abstract model of a fault in a combinational
circuit where a line is stuck either at value 0 (stuck-at-0 fault) or
1 (stuck-at-1 fault). In current technology, a stuck-at fault does not
simulate an actual defect but rather serves as a \ti{logical} fault
model. Tests detecting stuck-at faults are typically used in
manufacturing testing. However, they can also be employed in design
verification. The appeal of the stuck-at model is in the high-quality
of tests detecting stuck-at faults. It can be attributed to probing
corner input/output behaviors by these tests.

%
%  subsection: pecial case of false properties by PQE
%
\subsection{A special case of false properties built by PQE}
\label{ssec:spec_case}
In this section, we continue the notation of
Section~\ref{sec:gen_fls_props}. Let $N(X,Y,Z)$ be a combinational
circuit and $F(X,Y,Z)$ be a formula specifying $N$. Let $G$ be a
subset of clauses of $F$. Then formula $F$ can be represented as
$G \wedge F'$ where $F' = F \setminus G$. Let $F^*$ be a formula
obtained from $F$ by replacing $G$ with a set of clauses $G^*$
i.e. $F^* = G^* \wedge F'$.

In Subsection~\ref{ssec:fls_prop}, we considered generation of
property $Q(X,Z)$ by taking $G^*$ out of the scope of quantifiers in
\prob{Y}{G^* \wedge F'}. There, we imposed the requirement that for
every assignment \pnt{x} there was \pnt{z} such that
\stbl(\pnt{x},\pnt{z})=1 where \stbl = \prob{Y}{F^*}. In this section,
we strengthen this requirement by claiming that for every \pnt{x}
there exists exactly one \pnt{z} such that
\stbl(\pnt{x},\pnt{z})=1. Then one can formulate a stronger version of
Proposition~\ref{prop:fls_prop} (recall that \tbl denotes
\prob{Y}{F}).
%
% Proposition
%
\begin{proposition}
\label{prop:mod_prop}
 $F \not\imp Q$ iff $\tbl \not\equiv \stbl$.
\end{proposition}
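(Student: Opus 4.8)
The plan is to reduce Proposition~\ref{prop:mod_prop} to Proposition~\ref{prop:fls_prop} by showing that the strengthened requirement collapses the implication $\tbl \imp \stbl$ into the equivalence $\tbl \equiv \stbl$. Proposition~\ref{prop:fls_prop} already gives that $F \not\imp Q$ iff $\tbl \not\imp \stbl$, so it suffices to prove that, under the new requirement, $\tbl \not\imp \stbl$ holds iff $\tbl \not\equiv \stbl$.

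First I would record that both truth tables are \emph{single-valued}: for every input \pnt{x} there is exactly one output \pnt{z} at which the table equals $1$. For \stbl this is precisely the strengthened requirement. For \tbl it follows from $F$ describing a \ti{circuit}: a fixed input \pnt{x} forces a unique assignment to the internal variables $Y$ and hence a unique output \pnt{z}, so $\prob{Y}{F}$ selects exactly one \pnt{z} per \pnt{x}.

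Next I would prove the key lemma that, for single-valued \tbl and \stbl, the conditions $\tbl \imp \stbl$ and $\tbl \equiv \stbl$ coincide. One direction is immediate: if $\tbl \equiv \stbl$ then certainly $\tbl \imp \stbl$. For the converse, fix an arbitrary \pnt{x} and let \pnt{z} be the unique output with $\tbl(\pnt{x},\pnt{z}) = 1$. Then $\tbl \imp \stbl$ forces $\stbl(\pnt{x},\pnt{z}) = 1$, and since \stbl is single-valued, \pnt{z} is also \emph{its} unique output on \pnt{x}; hence the two tables agree on \pnt{x}. As \pnt{x} was arbitrary, this yields $\tbl \equiv \stbl$.

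Finally I would assemble the pieces: taking contrapositives of the lemma gives $\tbl \not\imp \stbl$ iff $\tbl \not\equiv \stbl$, and chaining this equivalence with Proposition~\ref{prop:fls_prop} yields $F \not\imp Q$ iff $\tbl \not\equiv \stbl$. I expect no serious obstacle here; the only point needing care is the single-valuedness of \tbl, since that structural fact (one output per input for a deterministic circuit) is exactly what upgrades the one-way implication of Proposition~\ref{prop:fls_prop} into the two-way equivalence claimed here.
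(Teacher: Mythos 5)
Your proposal is correct and follows essentially the same route as the paper: reduce to Proposition~\ref{prop:fls_prop}, then use the determinism of $N$ (one output per input for \tbl) together with the strengthened ``exactly one \pnt{z}'' requirement on \stbl to show that $\tbl \imp \stbl$ already forces $\tbl \equiv \stbl$. The only cosmetic difference is that you argue directly per input \pnt{x} while the paper runs the same argument as a proof by contradiction.
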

\begin{proof}
Since we consider a special case, Proposition~\ref{prop:fls_prop}
holds and so $F \not\imp Q$ iff $\tbl \not\imp \stbl$. Let us show
that under the new requirement to \stbl above, $\tbl \imp \stbl$
entails $\tbl \equiv \stbl$. (Since $\tbl \equiv \stbl$ trivially
implies $\tbl \imp \stbl$, this means that $\tbl \not\equiv \stbl$
entails $\tbl \not\imp \stbl$.)  Assume the contrary i.e. $\tbl
\not\equiv \stbl$. The only possibility here is that there exists an
assignment (\pnt{x},\pnt{z}) to $X \cup Z$ such that
$\tbl(\pnt{x},\pnt{z})=0$ and $\stbl(\pnt{x},\pnt{z})=1$.  Let \Pnt{z}
be the output produced by circuit $N$ for the input \pnt{x}.  Then
\tbl(\pnt{x},\Pnt{z})=1. Since $\tbl \imp \stbl$ then
\stbl(\pnt{x},\Pnt{z})=1 too. But this violates the strengthened
requirement on \stbl because $\stbl(\pnt{x},\pnt{z})$ and
 $\stbl(\pnt{x},\Pnt{z})$ are equal to 1 for the same \pnt{x}.
\end{proof}

\begin{remark}
\label{rem:atpg_test}
The strengthened requirement imposed on \stbl holds if, for instance,
$F^*$ specifies a circuit $N^*$ obtained by a modification of
$N$. Proposition~\ref{prop:mod_prop} implies that a test breaking the
property $Q$ also makes $N$ and $N^*$ produce different outputs and
vice versa. So, if $N^*$ describes a faulty version of $N$, a test
breaking $Q$ detects this fault and vice versa.
\end{remark}
%
%  subsection: modeling stuck-at faults
%
\subsection{An example of modeling a stuck-at fault}
\label{ssec:modeling}

Let $g$ be a gate of circuit $N$ given in
Example~\ref{exmp:circ_form}.  That is $g$ is a 2-input AND gate
specified by $v_3 = v_1 \wedge v_2$. The functionality of $g$ is
described by $C_1 \wedge C_2 \wedge C_3$ where $C_1 =
v_1 \vee \overline{v}_3$, $C_2 = v_2 \vee
\overline{v}_3$, $C_3 = \overline{v}_1 \vee \overline{v}_2 \vee
v_3$. Here $C_1,C_2,C_3$ are clauses of the formula $F$ specifying
$N$.  

Let $N^*$ denote the circuit obtained from $N$ by introducing the
stuck-at-0 fault at the output of $g$. A formula $F^*$ specifying
$N^*$ is obtained from $F$ by replacing $C_3$ with the clause
$C^*_3= \overline{v}_1 \vee \overline{v}_2 \vee \overline{v}_3$. (It
is not hard to check that by resolving clauses $C_1$,$C_2$ and $C^*_3$
on $v_1$ and $v_2$ one obtains the clause $\overline{v}_3$.) So $F^* =
C^*_3 \wedge F'$ where $F' = F \setminus \s{C_3}$. By taking $C^*_3$
out of the scope of quantifiers in $F^* = C^*_3 \wedge F'$ one obtains
a property $Q(X,Z)$. That is $\prob{Y}{C^*_3 \wedge F'} \equiv
Q \wedge \prob{Y}{F'}$.
Assume that $Q$ is a false property i.e. there exists an assignment
\pnt{p}=(\pnt{x},\pnt{y},\pnt{z}) breaking
$F \imp Q$. Then \pnt{x} makes $N$ and $N^*$ produce different outputs
i.e. \pnt{x} is a \ti{test} detecting the stuck-at fault at hand.

%
% subsection: finding stuck_at fault tests
%
\subsection{Finding stuck-at fault tests by PQE}
\label{ssec:st_at_by_pqe}
If one needs to find a single test detecting a stuck at-fault, there
is no need to generate the entire false property $Q$ above. For
instance, in the previous subsection, one can stop taking $C^*_3$ out
of the scope quantifiers, as soon as a clause $B(X,Z)$ not implied by
$F$ is generated. Then a test can be extracted from an assignment
satisfying $F \wedge \overline{B}$ (i.e. breaking $F \imp B$). So, PQE
can be used for generation of fault-detecting tests.

Modern tools for generation of fault detecting tests are a combination
of dedicated ATPG methods pioneered by the D-algorithm~\cite{roth} and
SAT-based algorithms~\cite{grasp,chaff}. To make a generic SAT-solver
work in the ATPG setting, some extra work is done. For instance, extra
variables and clauses are added to simulate signal
propagation~\cite{larrabee}. The appeal of ATPG by PQE-solving is that
the latter takes the best of both worlds. On one hand, like a
SAT-solver with conflict clause learning, a PQE-algorithm employs
powerful methods of learning~\cite{certif}. (In reality, the learning
of a PQE-solver is more powerful since in addition to deriving
conflict clauses, a PQE-solver also learns \ti{non-conflict clauses}.)
On the other hand, the machinery of clause redundancy~\cite{fmcad13},
can take into account some subtle structural properties of the circuit
at hand (e.g. observability). In particular, the redundancy based
reasoning of a PQE-solver makes simulating signal propagation quite
effortless and does not require adding new variables and clauses.

\end{document}